\documentclass{article}
\usepackage[utf8]{inputenc}
\usepackage{dirtytalk}
\usepackage{booktabs}
\usepackage{amsmath}
\usepackage{amssymb}
\usepackage{amsthm}
\usepackage{amsfonts}
\usepackage{mathabx}
\usepackage{mathtools}
\usepackage{mathpazo}
\usepackage{lipsum}
\usepackage{multicol}
\usepackage{quiver}
\usepackage{authblk}
\usepackage[hyphens]{url}
\usepackage{hyperref}
\usepackage[hyphenbreaks]{breakurl}
\usepackage[backend=biber,sorting=none, style=numeric]{biblatex}
\addbibresource{bibliography.bib}

\newtheorem{Axiom}{Axiom}
\newtheorem{definition}{Definition}
\newtheorem{Remark}{Remark}
\newtheorem{Proposition}{Proposition}

\title{Privacy: An Axiomatic Approach}
\author[1,2]{Alexander Ziller}
\author[1,2]{Tamara T. Mueller}
\author[2]{Rickmer Braren}
\author[1,3]{Daniel Rueckert}
\author[1,2,3]{Georgios Kaissis}
\affil[1]{Institute of Artificial Intelligence in Medicine, Technical University of Munich}
\affil[2]{Institute of Radiology, Technical University of Munich}
\affil[3]{Department of Computing, Imperial College London}
\date{}

\begin{document}
\maketitle

\begin{abstract}
    The increasing prevalence of large-scale data collection in modern society represents a potential threat to individual privacy. Addressing this threat, for example through \emph{privacy-enhancing technologies} (PETs), requires a rigorous definition of what exactly is being protected, that is, of privacy itself. In this work, we formulate an axiomatic definition of privacy based on quantifiable and irreducible information flows. Our definition synthesizes prior work from the domain of social science with a contemporary understanding of PETs such as differential privacy (DP). Our work highlights the fact that the inevitable difficulties of protecting privacy in practice are fundamentally information-theoretic. Moreover, it enables quantitative reasoning about PETs based on what they are protecting, thus fostering objective policy discourse about their societal implementation.
\end{abstract}

\section{Introduction}
Contemporary societies exhibit two disparate tendencies, which exist in fundamental tension. On one hand, the tendency to collect data about individuals, processes and phenomena on massive scale allows robust scientific advances such as data-driven medicine, the training of artificial intelligence (AI) models with near-human capabilities in certain tasks or the provision of a gamut of bespoke services. For fields such as medical discovery, data collection for the advancement of science can be viewed as an ethical mandate, and is encouraged by regulations under the term \emph{data altruism} \cite{altruism}. On the other hand, such data collection (especially for the sole purpose of economic gain, also termed \textit{surveillance capitalism} \cite{zuboff2019surveillance}) is critical from the perspectives of personal data protection and informational self-determination, which are legal rights in most countries. Often, the antagonism between data collection and data protection is viewed as a zero-sum game. However, a suite of technologies, termed \emph{privacy enhancing technologies} (PETs), encompassing techniques from the fields of cryptography, distributed computing and information theory, promises to reconcile this tension by permitting one to draw valuable insights from data while protecting the individual. The broad implementation of PETs can thus herald a massive increase in data availability in all domains by incentivizing data altruism through the guarantee of equitable solutions to the data utilization/data protection dilemma \cite{kaissis2020secure}.

Central to the promise of PETs is the protection of \emph{privacy} (by design) \cite{cavoukian2009privacy}. However, the usage of this term, which is socially and historically charged and often used laxely, entails considerable ambiguity, which can hamper a rigorous and formal societal, political and legislative debate. After all, it is difficult to debate the implementation of a set of technologies when it is unclear \emph{what exactly is being protected}. We contend that this dilemma can be resolved through a re-conceptualization of the term \emph{privacy}. We formulate a number of expectations towards such a novel definition:  it must be (1) anchored in the rich history of sociological, legal and philosophical privacy research, yet be formal and rigorous to be mathematically quantifiable; (2) easy to relate to by individuals; (3) actionable, that is, able to be implemented technologically and (4) future-proof, that is, resilient to future technological advancements, including those by malicious actors trying to undermine privacy.
The key contributions of our work towards this goal can be summarized as follows:
\begin{itemize}
    \item We formulate an axiomatic definition of privacy using the language of information theory;
    \item Our definition is naturally linked to \textit{differential privacy} (DP), a PET which is widely considered the \textit{gold standard} of privacy protection in many settings such as statistical machine learning;
    \item Lastly, our formalism exposes the fundamental challenges in actualizing privacy: Determining the origin of information flows and objectively measuring and restricting information.
\end{itemize}

\section{Prior work}
The most relevant prior works can be distinguished into the following categories: Works by Jourard \cite{jourard1966some} or Westin \cite{westin2015privacy} defined privacy as a right to restrict or control information about oneself. These definitions are relatable, as they tend to mirror the individual's natural notion of how privacy can be realized in everyday life, such as putting curtains on one's windows.
The foundational work of Nissenbaum on \emph{Contextual Integrity} (CI) \cite{nissenbaum2004privacy} instead contends that information restriction alone is not conducive to the functioning of society. Instead, information must flow \emph{appropriately} within a normative frame. This definition is more difficult to relate to, but very broad and thus suitable to capture a large number of privacy-relevant societal phenomena. Its key weakness lies in the fact that it attempts no formalization. Privacy cannot be quantified using the language of CI alone. 

Our work synthesizes the aforementioned lines of thought by admitting the intuitive and relatable notion of restricting the flow of sensitive information while respecting the fact that information flow is an indispensable component of well-functioning society.

The works of Solove \cite{solove2002conceptualizing, solove2005taxonomy, solove2008understanding} have followed an orthogonal approach, eschewing the attempt to define privacy directly, instead (recursively) defining it as a \emph{solution to a privacy problem}, that is, a challenge arising during information collection, processing or dissemination. This approach represents a natural counterpart to PETs, which represent such solutions and thus fulfil this notion of privacy. We note that, whereas our discussion focuses on DP, which is rooted in the work of Dwork et al. \cite{Dwork2013}, DP is not the only PET. We moreover discuss anonymization techniques such as $k$-anonymity \cite{sweeney2002k} as examples of technologies which offer no acceptable privacy under our definition. For an overview of PETs, we refer to Ziller et al. \cite{Ziller2020}.

Our formal framework is strongly related to Shannon's information theory \cite{shannon1948mathematical}. However, we also discuss a semi-quantitative relaxation of our definition, which attempts to measure qualitatively different information types (such as structural and metric information), which goes back to the work by MacKay \cite{Mac_Kay1969-du}.

Our work has strong parallels to the theories by Dretske \cite{dretske1981knowledge} and Benthall \cite{benthall2018context} in that we adopt the view that the meaning and ultimately information content of informational representations is caused by a \emph{nomic} association with their related data.

Lastly, we note that the field of \emph{quantitative information flow} (QIF) \cite{alvim2020science} utilizes similar abstractions as our formal framework, however focuses its purview more specifically to the study of \emph{information leakage} in secure systems. It would therefore be fair to state that our framework is a generalization of QIF to a more general societal setting.

\section{Formalism}
In this section, we introduce an axiomatic framework which supports our privacy definition. All sets of numbers in this paper are assumed to be nonempty and finite. We note that, while our theory has at its center abstract \emph{entities}, one could build intuition by considering the interactions between entities as representing human communication.  

\begin{definition}[Entity]
An entity is a unique rational agent which is capable on perceiving its environment based on some inputs, interacting with its environment through some outputs and making decisions. We will write $e_i$ for the $i$th entity in the set of entities $E$. Entities have memory and can thus hold and exercise actions on some data. We will write $d_j$ for the $j$th datum (or \emph{item} of data) in the dataset $D^e$ held by $e$. Examples of entities include: individuals, companies, governments and their representatives, organisations, software systems and their administrators, etc. 
\end{definition}

\begin{Remark}
The data held by entities can be owned (e.g. in a legal sense) by them or by some other entity. In acting on the data (not including sharing it with third parties), we say the entity is exercising \emph{governance} over it. We differentiate the following forms of governance:
\begin{enumerate}
    \item \emph{Conjunct} governance: The entity is acting on its own data (i.e. data owner and governor are conjunct). 
    \item \emph{Disjunct} governance: One or more entities is/are acting on another entity's data. We distinguish two forms of disjunct governance:
    \begin{itemize}
        \item \emph{Delegated} governance, where one entity is holding and/or acting on another's data and
        \item \emph{Distributed} governance, where $>1$ entity is holding parts of a single entity's data and acting on it. Examples of distributed governance include (1) distinct entities holding and acting on disjoint subsets \emph{(shards)} of one entity's data (e.g. birth-date, address) (2) distinct entities holding and acting on \emph{shares} of one entity's data (e.g. using secret sharing schemes) and (3) distinct entities holding and acting on \emph{copies} of one entity's data (e.g. the IPFS protocol).
    \end{itemize}
\end{enumerate}
The processes inherent to governance are typically considered parts of the \emph{data life-cycle} \cite{eryurek2021data}. They include safekeeping, access management, quality control, deletion, etc. Permanent deletion ends the governance process. 
\end{Remark}

\begin{definition}[Factor]
Factors are circumstances which influence an entity's behaviour. It's possible to classify factors as \emph{extrinsic} (e.g. laws, expectations of other entities, incentives, threats) and \emph{intrinsic} (e.g. hopes, trust, expectations, character), although this classification is imperfect (as there is substantive overlap) and not required for this formalism. Factors also modulate and influence each-other (see the example of trust and incentives below). We will write $f_i$ for the $i$th factor in the set of factors $F$.
\end{definition}

\begin{definition}[Society]
A set $S$ is called a society if and only if it contains $>1$ entities and $\geq 1$ factor(s) influencing their behaviors. Our definition is intended to parallel the natural perception of a society, thus we assume common characteristics of societies, such as temporal and spatial co-existence. The definition is flexible insofar as it admits the isolated observation of a \emph{useful} (in terms of modeling) subset of entities and relevant factors, such as religious or ethnic groups, which --although possibly subsets of a society in the social science interpretation-- have specific and/or characteristic factors which warrant their consideration as a society. Societies undergo temporal evolution through the interaction between entities and factors. We will sometimes write $S_t$ to designate a \say{snapshot} of society at a discrete time point $t$; when omitting the subscript, it is implied that we are observing a society at a single, discrete time point. 
\end{definition}

\begin{definition}[Communication]
Communication is the exchange of data between entities. It includes any verbal and non-verbal form of inter-entity data exchange.
\end{definition}

\begin{Axiom}
Society cannot exist without communication. Hence, communication arises naturally within society.
\end{Axiom}

\begin{Remark}
For a detailed treatment, compare Axiom 1 of \cite{watzlawick2011pragmatics}.
\end{Remark}

Our formalism is focused on a specific form of communication between entities called an \emph{information flow}.

\begin{definition}[Information]
Let $e$ be an entity holding data $D^e$. We denote as \emph{information} $I \subseteq D^e$ a structured subset of $D^e$ with the following properties:
\begin{itemize}
    \item It has a nomic association with the set of data $D^e$, that is, a causal relationship exists between the data and its corresponding informational representation;
    \item The nomic association is \emph{unique}, that is, each informational representation corresponds to exactly one datum such that the state of one item of information $X_n \in I$ is determined solely by the state of one datum $d_n \in D^e$;
    \item It is measurable in the sense that \emph{information content} is a quantitative measure of the complexity of assembling the representation of the data.
\end{itemize}
\end{definition}

This definition interlinks two foundational lines of work. Dretske \cite{dretske1981knowledge} postulates that meaning is acquired through nomic association between the message's content and the data it portrays. This aspect has been expanded upon by Benthall et al. \cite{benthall2018context}, who frame nomic association in the language of Pearlian causality \cite{pearl2009causality} to analyze select facets of Contextual Integrity under the lens of \emph{Situated Information Flow Theory} \cite{benthall2019situated}. The notion of information quantification as a correspondence between information content and complexity of reassembling a representation is central to information theory. We note that we utilize this term to refer to two distinct schools of thought. In the language of Shannon's information theory \cite{shannon1948mathematical}, information content is a measure of \emph{uncertainty reduction} about a random variable. Here, information content is measured in \emph{Shannons} (typically synonymously referred to as \emph{bit(s)}). Shannon's information theory is the language of choice when discussing privacy-enhancing technologies such as DP. Our definition of information embraces this interpretation, and we will assume that --for the purposes of quantifying information-- informational representations are indeed random variables. In the Shannon information theory sense, we can therefore modify our definition as follows:

\begin{definition}[Information (in the Shannon sense)]
Let $e$, $D^e$ and $I$ be defined as above. Then, every element $X_n \in I$ is a random variable with mass function $p_{X_n}(x)$ which can be used to resolve uncertainty about a single datum $d^n$ through its nomic association with this datum. Moreover, the information content of $X_n$ is given by:
\begin{equation}
\mathbb{I}(X) = -\log_2 p_{X_n}(x).
\end{equation}
\end{definition}

Moreover, our framework is also compatible with a \emph{structural/metrical} information theory viewpoint. This perspective, which developed alongside Shannon's information theory and is rooted in the foundational work by MacKay \cite{Mac_Kay1969-du} is a superset of the former. Here, information content in the Shannon sense is termed \textit{selective information content} (to represent the fact each bit represents the uncertainty reduction by observing the answer to a question with two possible outcomes, i.e. selecting from two equally probable states). Moreover, information content can be \emph{structural} (representing the number of distinguishable groups in a representation measured in \emph{logons}) and \emph{metrical} (representing the number of indistinguishable logical elements in a representation and measured in \emph{metrons}). We note that the difficulty of measuring real-world information is inherent to both schools of information theory (compare also discussion in \cite{Mac_Kay1969-du}--Chapter 2).

\begin{definition}[Information flow]
An information flow (or just flow) $\mathcal{F}$ is a directed transit of information between exactly two entities. We call the origin of $\mathcal{F}$ the \emph{sender} $\mathcal{S}$ and the recipient of $\mathcal{F}$ the \emph{receiver} $\mathcal{R}$. The subject of $\mathcal{F}$ is called a \emph{message} $\mathcal{M}$ and contains a single informational representation. $\mathcal{M}$ flows over a channel $C$ (a medium) which we assume to be noiseless, sufficiently capacious and error-free. We will sometimes represent a flow as:
\begin{equation}
\mathcal{F}: \mathcal{S} \xrightarrow[C]{\mathcal{M}} \mathcal{R}. 
\end{equation}
\end{definition}

\begin{Remark}
Flows are the irreducible unit of analysis in our framework and are \textbf{atomic} and \textbf{pairwise}. This means that they concern \emph{exactly one datum} and they take place between \emph{exactly two entities}. This fact distinguishes our formalism from CI (which uses a similar terminology), where flows are defined more broadly and pertain to \say{communication} in a more general way and bears strong similarities to QIF \cite{alvim2020science}, where information is also viewed as flowing through a channel. Our naming for components of the flow follows standard information-theoretic literature \cite{berlo1965process}.
\end{Remark}

\begin{Remark}
We will use the term \emph{information content} of $\mathcal{M}$ to denote the \emph{largest possible} quantity of information which can be derived by observing $\mathcal{M}$, including the information obtained by any computation on $\mathcal{M}$, irrespective of prior knowledge. This view is compatible with a \emph{worst-case} outlook on privacy where the receiver of the message is assumed to obtain $\mathcal{M}$ in its entirety and make every effort available to reassemble the representation of the datum which $\mathcal{M}$ refers to.  
\end{Remark}

\begin{Remark}
A line of prior work, such as the work by McLuhan \cite{mcluhan1967medium} has contended that the medium of transmission (i.e. the channel) modulates (and sometimes is a quintessential part of) the message. This point of view is not incompatible with ours, but we choose to \emph{incorporate} the characteristics of the channel into other parts of the flow as our framework is information-theoretic but \emph{not} communication-theoretic. For example, under our definition, an insecure (leaky) channel is regarded as giving rise to a new flow towards one or more additional receivers (see \emph{implicit flows} below), while a corruption of the message by noise or encoding errors is deemed as directly reducing its information content. Therefore, we will implicitly assume that the state of a message is determined solely by the corresponding information which is being transmitted.
\end{Remark}

Although flows are atomic, human communication is not: very few acts of communication result in the transmission of information only about a single datum. We thus require a tool to \say{bundle} all atomic flows which arise in a certain circumstance (e.g. in a certain social situation, about a specific topic, etc.). We call these groupings of flows \emph{information flow contexts}. Moreover, communication also often happens between more than two entities (one-to-many or many-to-one scenarios). Such scenarios are discussed below. 

\begin{definition}[Information flow context]
Let $S_t$ be a society at time $t$ such that $\mathcal{S}, \mathcal{R} \in S_t$ and $\mathcal{F}_1, \dots, \mathcal{F}_n$ be flows $\mathcal{S} \xrightarrow[C]{\mathcal{M}_1, \dots, \mathcal{M}_n} \mathcal{R}$. Then, we term the collection $\mathfrak{C}_t = (\mathcal{S}, \mathcal{R}, \mathcal{F}_1, \dots, \mathcal{F}_n)$ an information flow context (or just context). 
\end{definition}

Flows are stochastic processes. This means they can \emph{arise} randomly. The probability of their occurrence in a given society depends on numerous latent factors. Depending on the causal relationship between the appearance of a flow and an entity's decision, we distinguish the following cases:

\begin{definition}[Explicit flow]
An explicit flow arises as a causal outcome of a decision by the entity whose data is subject to the flow.
\end{definition}

\begin{definition}[Decision]
Let $e$ be an entity and $\boldsymbol{f} = (f_1, \dots, f_n)$ a collection of factors influencing its behavior. We model the decision process as a random variable $o \sim \mathrm{Ber}(p \mid \boldsymbol{f})$ conditioned on the factors. Then, the \emph{decision} $\chi_e$ takes the following values:
\begin{equation}
    \chi_e = 
    \begin{cases*}
    \mathcal{F} & $o=1$ \\ \bot & otherwise
    \end{cases*}
\end{equation}
\end{definition}
Note that $\mathrm{Ber}$ denotes the Bernoulli distribution and $\bot$ implies that no action is undertaken.
We hypothesize the probability of decisions resulting in explicit flows to be heavily influenced by two factors. Of these, the most important is probably \emph{trust}. In interpersonal relationships characterized by high levels of trust, entities are more likely to engage in information flows. Moreover, the reason for most societal information flows can be ultimately distilled to trust between entities on the basis of some generally accepted norm. For example, information flows from an individual acting as a witness in court towards the judge are ultimately linked to the trust in the socially accepted public order. Low levels of trust thus decrease the overall probability of an explicit flow arising. We also contend that trust acts as a barrier imposing a \emph{lower bound} on the amount of information (described below) that an entity is willing to accept in a flow. The other main factor influencing the probability of explicit flows arising are likely \emph{incentives}. For instance, the incentive of a larger social circle can entice individuals into engaging in explicit flows over social networks. The incentive of a free service provided over the internet increases the probability that the individual will share personal information (e.g. allow cookies). We note that --like all societal factors-- incentives and trust modulate each other. In some cases, strong incentives can decrease the trust threshold required to engage in a flow while in others, no incentives are sufficient to outweigh trust. In addition, society itself can impose certain bounds on the incentives which are allowed to be offered or whether explicit flows are permitted \emph{despite} high trust (e.g. generally disallowing the sharing of patient information between mutually trusting physicians who are nonetheless not immediately engaged in the treatment of the same individual).

\begin{definition}[Implicit flow]
An implicit flow arises without a causal relationship between the entity whose data is subject to the flow and the occurrence of the flow, but rather due to a causal relationship between another entity's decision and the occurrence of the flow or by circumstance.
Thus, an implicit flow involving an entity $e$ can be modelled as a random variable $o \sim \mathrm{Ber}(p)$, where $p$ is independent of the factors influencing $e$ such that:
\begin{equation}
    o = 
    \begin{cases*}
    \mathcal{F} & w.p. $p$ \\ \bot & w.p. $1-p$
    \end{cases*}
\end{equation}
\end{definition}

An example of an implicit flow is the recording of an individual by a security camera in a public space of which the individual was not aware. Implicit flows are sometimes also called \emph{information leaks} and can arise in a number of systems, even those typically considered perfectly secure. For example, a secret ballot which results in an unanimous vote implicitly reveals the preference of all voters. The quantification of information leakage is central to the study of QIF.

As flows are --by definition-- pairwise interactions, analysing many-to-one and one-to-many communication thus requires special consideration. While \emph{one-to-many} communication can be \say{dismantled} into pairwise flows in a straightforward way, \emph{many-to-one} communication requires considering ownership and governance of the transmitted information. For instance, many-to-one communication where each sender has conjunct governance and ownership of their data can be easily modelled as separate instances of pairwise flows. However, when governance is disjunct or when correlations exist between data, it is required to \say{marginalise} the contribution of the entities whose data is involved in the flow, even if they themselves are not part of it. Thus, many-to-one-communication can lead to implicit flows arising. This type of phenomenon is an emergent behaviour in systems exhibiting complex information flows such as societies and has been described with the term \emph{information bundling problem} by \cite{trask2020beyond}. For example, the message \say{I am an identical twin} flowing from a sender to a receiver reduces the receiver's uncertainty about the sender's sibling's biological gender and genetic characteristics. As data owned by the sibling and governed by the sender is flowing, information can be considered as implicitly flowing from the sibling to the receiver.   

Finally, equipped with the primitives above, we can define \emph{privacy}: \noindent
\begin{definition}[Privacy]
Let $\mathcal{F}$ be a flow of a message $\mathcal{M}$ between a sender $\mathcal{S}$ and a receiver $\mathcal{R}$ over a channel $C$ embedded in a context $\mathfrak{C}$. Then, \emph{privacy} is the ability of $\mathcal{S}$ to upper-bound the information content of $\mathcal{M}$ and of any computation on $\mathcal{M}$, independent of the receiver's prior knowledge.
\end{definition} \noindent
The following implications follow immediately from the aforementioned definition:
\begin{itemize}
    \item It relates directly to an \emph{ability} of the sender. We contend that this formulation mirrors the widespread perception of privacy, e.g. as it is formulated in laws. Here, the right to privacy stipulates a legal protection of the ability to restrict information about certain data;
    \item Our definition, like our primitives, is atomic. It is possible to maintain privacy \emph{selectively}, i.e. about a single datum. This granularity is required as privacy cannot be viewed \say{in bulk};
    \item Privacy is contextual. The factors inherent to the specific context in which an information flow occurs (such as trust or incentives above) and the setting of the flow itself therefore largely determine the resulting expectations and entity behaviors, similar to Contextual Integrity. As an example, a courtroom situation (in which the individual is expected to tell the truth and disclose relevant information) is not a privacy violation, as the ability of the individual to withhold information still exists, but the individual may choose to not exercise it. On the flip side, tapping an individual's telephone \emph{is} a privacy violation, independently of whether it is legally acceptable or illegal. Our framework thus separates between privacy as a faculty and the circumstances under which it is acceptable to maintain it. Edge cases also exist: for example, divulging sensitive information under a threat of bodily harm or mass surveillance states where every privacy violation is considered acceptable are not within the scope of our definition, which assumes a well-functioning society. 
\end{itemize}

\section{Connections to PETs}
PETs are technologies which aim to offer some quantifiable guarantee of privacy through purely technical means. The fact that the term \emph{privacy} is used loosely harbors considerable risks, as the subject of protection is very often \textbf{not} privacy in the sense above. Our framework is naturally suited to analyzing the guarantees provided (or not) by various techniques considered PETs. In the current section, we discuss how DP naturally fulfills our definition, while anonymization techniques do not. Of note, we rely on Shannon's information theory to quantify information content in this section.

\subsection*{Anonymization and its variants}
Anonymization techniques have a long history in the field of private data protection and can be considered the archetypal methods to protect privacy. Except anonymization (i.e. the removal of identifiable names from sensitive datasets), a broad gamut of similar techniques has been proposed, e.g. k-anonymity \cite{sweeney2002k}. It is widely perceived among the general population that this offers security against re-identification and hence preserves privacy. However, prior work on \emph{de-anonymization} has shown \cite{deanon} that anonymization is not resilient to auxiliary information and that the guarantees of techniques like $k$-anonymity degrade unpredictably under post-processing of the message \cite{deMontjoye2013}. As described in the definition of privacy above, the information content of the message should not be able to be arbitrarily increased by any computation on it message or by any prior knowledge (auxiliary information) the receiver has. Therefore, none of these techniques offer privacy in the sense described above, but are solely means to hinder private information from being immediately and plainly readable.

\subsection*{Differential privacy}
Differential privacy (DP) \cite{Dwork2013} is a formal framework and collection of techniques aimed at allowing analysts to draw conclusions from datasets while protecting individual privacy. The guarantees DP offers are \emph{exactly} compatible with our definition of privacy, rendering DP the gold-standard technique for privacy protection within a specific set of requirements and settings. To elaborate this correspondence, we provide some additional details on the DP guarantee. We will constrain ourselves to the discussion of $\varepsilon$-DP and \emph{local} DP in the current work. 

Consider an entity $E$ holding data $D^E$ (we deviate from our usage of lowercase symbols for entities to avoid confusion with Euler's number in this section). Assume the entity wants to transmit a message $\mathcal{M}$ concerning a datum $X \in D^E$. Let $X$ be a random variable taking values in $\mathcal{X} \subset \mathbb{R}$, where $\mathcal{X}$ has cardinality $n>0$. Consider a flow $\mathcal{F}: E \xrightarrow[]{\mathcal{M}} \mathcal{R}$, where $\mathcal{R}$ is a receiver. Then, preserving privacy regarding $X$ under our definition is the ability of $E$ to upper-bound the information content of $\mathcal{M}$ about $X$. 

Assume now that $E$ applies a DP \emph{mechanism}, that is, a randomized algorithm $\mathcal{A}: \mathcal{X} \rightarrow \mathcal{Y}$ operating on $X$ to produce a \emph{privatized} output $Y \sim p_{\mathcal{A}, X}(x)$ which forms the content of the message $\mathcal{M}$. We will omit the subscript on the probability mass function for readability in the following. This procedure forms the following Markov chain:
\begin{equation}
    X \xrightarrow[]{\mathcal{A}} Y \rightarrow \mathcal{M},
    \label{markov_chain}
\end{equation}
where $\mathcal{M}$ is observed by $\mathcal{R}$. The fact that $\mathcal{A}$ preserves (local) $\varepsilon$-DP offers the guarantee that $\forall \; x, x' \in \mathcal{X} \, |\, g(x, x') \leq 1$, where $g$ is the discrete metric, and $\forall \; y \in \mathcal{Y}$ the following holds:
\begin{equation}
    p(\mathcal{A}(x) = y) \leq e^{\varepsilon} p(\mathcal{A}(x') = y).
    \label{dp_eq}
\end{equation}
We note that the guarantee is given over the randomness of $\mathcal{A}$. We will show that this implies an upper bound of $\log_2 e^{\varepsilon}$ on the mutual information between $X$ and $Y$, and therefore on the amount of information $Y$ (and thus $\mathcal{M}$) \say{reveals} about the true value of $X$. In this sense, the application of a DP mechanism is a sufficient measure to upper-bound the information of $\mathcal{M}$. 
\begin{Proposition}
Let $\mathcal{A}, X, Y$ be defined as above. Then, if $\mathcal{A}$ satisfies $\varepsilon$-DP, the following holds:
\begin{equation}
    I(Y \parallel X) = I(\mathcal{A}(X) \parallel X) \leq \log_2 e^{\varepsilon} \; \mathrm{Sh},
\end{equation}
where $I(\cdot \parallel \cdot)$ denotes the mutual information and $\mathrm{Sh}$ is the Shannon unit of information.
\end{Proposition}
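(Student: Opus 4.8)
The plan is to expand the mutual information in its standard summation form and then bound it pointwise using the defining inequality \eqref{dp_eq} of $\varepsilon$-DP. Since mutual information is symmetric, I would identify $I(Y \parallel X)$ with $I(X;Y)$ and write, using the shorthand $p(y \mid x) := p(\mathcal{A}(x)=y)$ and the marginal $p(y) := \sum_{x' \in \mathcal{X}} p(x')\, p(y \mid x')$,
\begin{equation}
I(X;Y) = \sum_{x \in \mathcal{X}} \sum_{y \in \mathcal{Y}} p(x)\, p(y \mid x) \log_2 \frac{p(y \mid x)}{p(y)}.
\end{equation}
The entire argument then reduces to controlling the log-ratio $p(y \mid x)/p(y)$ that appears inside the sum.

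The key step is to lower-bound the marginal $p(y)$ by the conditional $p(y \mid x)$. The DP guarantee \eqref{dp_eq} holds for every pair $x, x' \in \mathcal{X}$ (note that under the discrete metric $g$, the neighbouring condition $g(x,x') \le 1$ is met by all distinct pairs), so in particular $p(y \mid x') \ge e^{-\varepsilon} p(y \mid x)$ for all $x'$. Averaging this against the prior $p(x')$ and using $\sum_{x'} p(x') = 1$ would yield
\begin{equation}
p(y) = \sum_{x' \in \mathcal{X}} p(x')\, p(y \mid x') \ge e^{-\varepsilon} p(y \mid x) \sum_{x' \in \mathcal{X}} p(x') = e^{-\varepsilon} p(y \mid x),
\end{equation}
and hence $p(y \mid x)/p(y) \le e^{\varepsilon}$ for every admissible pair $(x,y)$. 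Taking $\log_2$ gives the pointwise bound $\log_2 \big( p(y \mid x)/p(y) \big) \le \log_2 e^{\varepsilon}$.

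Substituting this bound termwise into the expansion — which is legitimate because the weights $p(x)\, p(y \mid x)$ are nonnegative and sum to $1$ — collapses the sum to the desired constant:
\begin{equation}
I(X;Y) \le \log_2 e^{\varepsilon} \sum_{x \in \mathcal{X}} \sum_{y \in \mathcal{Y}} p(x)\, p(y \mid x) = \log_2 e^{\varepsilon} \; \mathrm{Sh}.
\end{equation}

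The computation is essentially routine, and I do not expect a serious obstacle; the single place that warrants care is the direction in which \eqref{dp_eq} is applied. I want the marginal in the denominator to be large relative to the conditional, so I would invoke the guarantee in the swapped form $p(y \mid x') \ge e^{-\varepsilon} p(y \mid x)$ so that the averaged quantity is bounded \emph{below}. I would also record the degenerate cases: terms with $p(y \mid x) = 0$ vanish under the convention $0 \log 0 = 0$, while $p(y) = 0$ cannot coexist with a positive summand since $p(y) \ge p(x)\, p(y \mid x)$, so the log-ratio is well defined wherever it contributes. Crucially, no structure of $\mathcal{A}$ beyond \eqref{dp_eq} enters, which is exactly the point: pure $\varepsilon$-DP alone caps the transmitted information at $\log_2 e^{\varepsilon}$ Shannons, independent of the prior on $X$ and hence of the receiver's knowledge about it.
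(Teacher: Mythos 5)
Your proposal is correct and follows essentially the same route as the paper: both arguments average the $\varepsilon$-DP inequality over $x'$ against the prior $p(x')$ to obtain the pointwise bound $p(y \mid x)/p(y) \leq e^{\varepsilon}$, and then take $\log_2$ and the expectation over the joint distribution $p_{XY}$ to bound the mutual information by $\log_2 e^{\varepsilon}$. Your explicit treatment of the degenerate cases ($p(y \mid x) = 0$ and the impossibility of $p(y) = 0$ on a contributing term) is a small added nicety the paper leaves implicit, but it is not a different proof.
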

\begin{proof}
We begin by re-writing Equation (\ref{dp_eq}) for readability:
\begin{equation}
     p(\mathcal{A}(x) = y) \leq e^{\varepsilon} p(\mathcal{A}(x') = y) \Rightarrow p(y | x) \leq e^{\varepsilon} p(y | x').
\end{equation}
Multiplying both sides by $p(x')$, we obtain:
\begin{equation}
    p(y | x)p(x') \leq e^{\varepsilon} p(y | x')p(x') = p(y, x').
\end{equation}
Marginalizing out $x'$, we have:
\begin{equation}
    \sum_{x'}p(y | x)p(x') \leq e^{\varepsilon} \sum_{x'}p(y, x') \Rightarrow p(y|x)\leq e^{\varepsilon}p(y).
\end{equation}
The above can be rewritten as:
\begin{equation}
    \frac{p(y|x)}{p(y)} = \frac{p(y|x)p(x)}{p(y)p(x)} = \frac{p(x,y)}{p(x)p(y)}\leq e^{\varepsilon}.
\end{equation}
Taking the logarithm of both sides and applying the expectation operator, we obtain:
\begin{equation}
    \mathbb{E}_{p_{XY}}\left[\log_2 \left(\frac{p(x,y)}{p(x)p(y)} \right)\right] \leq \mathbb{E}_{p_{XY}} \left[ \log_2 e^{\varepsilon} \right] = \log_2 e^{\varepsilon}.
    \label{expectation}
\end{equation}
The left hand side of Equation (\ref{expectation}) is the mutual information $I(Y \parallel X)$, from which the claim follows.
\end{proof}

\begin{Remark}
By the information processing inequality, this also bounds the mutual information between $X$ and $\mathcal{M}$ in Equation (\ref{markov_chain}).
\end{Remark}

DP has additional beneficial properties, such as a predictable behavior under composition, whereby $n$ applications of the DP mechanism will result in a cumulative information content of at most $\log_2 e^{n \varepsilon}$. Moreover, like our definition, ($\varepsilon$-)DP holds in the presence of computationally unbounded adversaries and is closed under arbitrary post-processing. This renders DP a powerful and general tool to satisfy our privacy definition in a variety of scenarios. Its utilization is often aimed not at individual data records (such as the local DP example above) but at statistical databases, where it gives guarantees of privacy to every individual in the database. In fact, it is simple to generalize the proposition above to such statistical databases to show that \emph{every} individual in a database enjoys the same guarantee of bounded information about their personal data.

Despite the strong and natural links between our privacy definition and DP, it cannot be claimed that they are identical. For one, DP is a quantitative definition and is not designed to handle semi-quantitative notions of information content such as structural or metric information content. Nonetheless, the similarities between ensuring DP and ensuring privacy in our sense are striking: Whereas DP is a notion of privacy which can be implemented using \emph{statistical noise}, one could think about semantic, metric or structural privacy as being implemented using \emph{communication noise} \cite{rothwell2010company}. A simple example of such \emph{noise} is transmitting false information, which --under our definition-- can be used as a form of privacy preservation. We expressly note that the inverse does \textbf{not} hold: Whereas a flow satisfying DP also satisfies our definition of privacy, satisfying our privacy definition is \textbf{not} a valid DP guarantee. This is also simple to mathematically verify by e.g. showing that a bound on mutual information does not represent a useful DP guarantee (as it translates to a bound on total variation distance between the input and output distributions, thus not bounding the magnitude of a worst-case event, but only its probability of occurrence).

Another point of differentiation between our privacy definition and DP is context-reliance. DP is --by and large-- a guarantee which does not concern itself with context. This is a \say{feature} of DP and not a shortcoming, and what renders it powerful and flexible. However, there exist situations in which a valid DP guarantee may not translate directly into an acceptable and relatable result. Consider the example of publishing an image under local DP. Even though the direct addition of noise to an image may satisfy a DP guarantee, the amount of noise which is required to be added to satisfy a guarantee which is considered acceptable by most individuals (that is, one which hides relevant features), is likely to render the image entirely unrecognizable, also nullifying utility. On the flip side, most individuals would consider an appropriately blurred image as preserving acceptable privacy, even though such a blurring operation (especially if carried out on only parts of the image) may be difficult to analyze under the DP lens. We nonetheless consider the development of rigorous and quantifiable DP guarantees for such scenarios a promising and important future research direction.

\section{Discussion: why is privacy difficult to protect in practice?}
Our definitional framework sheds light on many of the challenges of protecting privacy \emph{in the real world}. These challenges arise from a discrepancy created by the assumptions required to formally define what privacy is and the facets of human communication. We highlight some of these challenges in this section.

The first fundamental challenge in preserving privacy is the difficulty in \textbf{assigning a flow to its origin}. The complexity of this task is two-fold. As discussed under \emph{implicit flows} above it means determining which \emph{entity} the flow originated from, as communication can often (possibly involuntarily) involve information of more than one entity. Consider the following context $\mathfrak{C}$ from the example with the identical twins above. The flow $\mathcal{F}_1$ involves sender $\mathcal{S}_1$ and receiver $\mathcal{R}$. However, in revealing that they are an identical twin, information starts flowing from $\mathcal{S}_2$, the sibling to $\mathcal{R}$, thus inducing an additional (implicit) information flow context $\mathfrak{C}_2$.

\begin{equation}
\begin{tikzcd}
	{\mathfrak{C}_2} & {\mathcal{F}_2:} & {\mathcal{S}_2} \\
	{} & {} & {} & {\mathcal{R}} \\
	{\mathfrak{C}_1} & {\mathcal{F}_1:} & {\mathcal{S}_1}
	\arrow[from=1-3, to=2-4]
	\arrow[from=3-3, to=2-4]
	\arrow["{\text{causes}}"{description}, dashed, from=3-2, to=1-2]
	\arrow["{\text{causes}}"{description}, dashed, from=3-1, to=1-1]
\end{tikzcd}
\end{equation}

The second facet of the challenge arises from attempting to resolve the nomic association of an informational representation with its associated datum as well as its strength. Mathematically, this problem is equivalent to exact inference on a causal Bayesian graph. Consider the following causal graphical model:

\begin{equation}
\begin{tikzcd}
	A && G \\
	B & D & E & X & {\mathcal{M}} \\
	C && F
	\arrow[from=1-1, to=2-2]
	\arrow[from=1-1, to=2-3]
	\arrow[from=2-1, to=2-2]
	\arrow[from=3-1, to=2-2]
	\arrow[from=2-2, to=2-3]
	\arrow[from=3-3, to=2-3]
	\arrow[from=2-3, to=2-4]
	\arrow[from=2-4, to=2-5]
	\arrow[from=1-3, to=2-4]
\end{tikzcd}
\end{equation}

Here, $(A, \dots, X)$ are random variables and we assume that all arrows indicate causal relationships, that is, the state of the variable at the origin of the arrow causes the state of the variable at its tip. Even in this relatively simple example, and given that the causal relationships are known, $X$ contains information about $\left(A, \dots, G \right)$. Moreover, there is both a fork phenomenon between $A, D$ and $E$ and a collider phenomenon between $G, E$ and $X$. The determination of how much information was revealed about each of the variables by transmitting $\mathcal{M}$ therefore requires factoring the graph into its conditional probabilities, which is, in general, NP-hard. However, such graphs, and even much more complicated ones, are likely very typical for human communication, which is non-atomic and thus contains information about many different items of the entity's data. Moreover, determining causality in such settings can be an impossible task. These findings underscore why it is considerably easier to protect privacy in a quantitative sense in statistical databases (where singular items of data are captured) than in general communication, and why it is likely impossible to reason quantitatively about privacy in the setting of human communication \say{in the wild} without making a series of assumptions.    
The aforementioned difficulties are not unique to the quantitative aspects of our definition but also inherent to DP. The DP framework does not make assumptions about the data of individuals in a database not being correlated with each other, however the outcome of such a situation may be surprising to individuals who believe their data to be protected when --against their expectation-- inferring an attribute the individual considers \say{private} becomes possible by observing the data of another individual. In general, much like DP, our privacy definition does not consider statistical inference a privacy violation. As an example, consider an individual which participates in a study for a novel obesity medication. Even if the study is conducted DP, it is \emph{not} a privacy violation to determine that the individual is probably overweight. However, learning their exact weight \emph{can} be considered a privacy violation. These examples also highlight the minute, but important, difference between data which is merely \emph{personal} (in this case, the fact that the individual is overweight) and data that is truly \emph{private} (here, their concrete weight). It also motivates a view shared by both our definition and the DP definition, namely that preserving privacy can be thought of as imposing a bound on the \emph{additional} risk an individual incurs by committing to a transmission of their data. For a formal discussion, we refer to \cite{tschantz2020sok}.   

Last but not least, we emphasize that the very attempt to \emph{measure and restrict} information is a challenging undertaking. In the case of statistical databases and with techniques such as DP, such measurement is possible to an extent, even though certain assumptions may be required. For example, encoding data as a series of yes/no answers may or may not sufficiently represent the true information content of the data, but be required to enable the utilization of a DP technique. In this setting, a choice has to be made between a rigorous method of privacy protection based on Shannon's information theory and an unedited representation of the entity's data. The former simplification is often required for machine learning applications (where it is referred to as \emph{feature extraction}). For cases where a rigorous measurement of the message's information content is not required for a context to acceptably preserve privacy, one of the other aforementioned techniques of restricting information content (e.g. metrically or structurally) may be implemented.   

\section{Conclusion and future work}
We introduce an axiomatic privacy definition based on a flexible, information-theoretic formalism. Our framework has a close and natural relationship to the guarantees offered by PETs such as differential privacy, while it also explains why techniques such as anonymization, which purportedly preserve privacy, in actuality do not. Our definition encompasses not only Shannon's information theory, but can also be used to cover structural or metrical interpretations which are encountered in human perception and communication. 

Our formalism exhibits strong links to complex systems research \cite{Flaherty2019} and lends itself to experimental evaluation using agent-based models or reinforcement learning. We intend to implement such models of information flow in society, for example to investigate economic implications of privacy, in future work. Moreover, we intend to propose a more holistic taxonomy of other PETs, such as cryptographic techniques and distributed computation methods. Lastly, we encourage the utilization of our formalism by social, legal and communication scientists to find a \say{common grounds} of reasoning about privacy and the guarantees offered by various technologies. Such standard terminology (data ownership, governance, privacy, etc.) will promote a clear understanding of the promises and shortcomings of such technologies and be paramount for their long-term acceptance, the objective discourse about them on a political and social level and ultimately, their broad implementation and adoption.

\printbibliography

\end{document}